\newcommand{\Fmc}{\ensuremath{\mathcal{F}}\xspace}
\newcommand{\Imc}{\ensuremath{\mathfrak{I}}\xspace}
\newcommand{\Pmc}{\ensuremath{\mathcal{P}}\xspace}
\newcommand{\nats}{\ensuremath{\mathbb{N}}\xspace}
\newcommand{\reals}{\ensuremath{\mathbb{R}}\xspace}
\DeclareSymbolFont{ssymbols4}      {LS1}{stixbb}   {m}{it}
\DeclareMathSymbol{\diamondleftblack}        {\mathord}{ssymbols4}{"D1}
\newcommand{\ptdl}{\textsl{TLD-Lite}\xspace}
\newcommand{\dd}{\ensuremath{\raisebox{0.37ex}{\scalebox{0.74}{$\diamondleftblack$}}}\xspace}
\newcommand{\DL}{\textsl{DL-Lite}}
\newcommand{\coNExpTime}{\textsc{coNExpTime}}
\newcommand{\PSpace}{\textsc{PSpace}}
\newcommand{\nxt}{{\ensuremath\raisebox{0.25ex}{\text{\scriptsize$\bigcirc$}}}}
\newcommand{\bx}{{\ensuremath\raisebox{0.1ex}{$\Box$}}}
\newcommand{\Since}{\mathbin{\mathcal{S}}}
\newcommand{\U}{\mathbin{\,\mathcal{U}}\xspace}
\newcommand{\I}{\mathcal{I}}
\newcommand{\A}{\mathcal{A}}
\newcommand{\TO}{\mathcal{O}}
\newcommand{\T}{\mathcal{T}}
\newcommand{\R}{\mathcal{R}}
\newcommand{\K}{\mathcal{K}}
\newcommand{\ind}{\mathsf{ind}}
\newcommand{\avec}[1]{\boldsymbol{#1}}
\newcommand{\dom}{\mathop{\textit{dom}}}
\newcommand{\av}{\boldsymbol a}
\newcommand{\J}{\mathcal{J}}
\newcommand{\SVbox}{\mathop{\ooalign{$\Box$ \cr \kern0.4ex
    \raisebox{0.25ex}{\scalebox{0.7}{$*$}}}\rule{0pt}{1.5ex} \kern-0.7ex}}
\newcommand{\SVdiamond}{\mathop{\ooalign{\scalebox{1.1}{$\Diamond$} \cr \kern0.35ex
    \raisebox{0.25ex}{\scalebox{0.7}{$*$}}} \kern-0.9ex}}
\newcommand{\diamondplus}{\mathop{\ooalign{\scalebox{1.1}{$\Diamond$} \cr \kern0ex
    \raisebox{0.25ex}{\scalebox{0.9}{$+$}}} \kern-0.8ex}}
\newcommand{\diamondminus}{\mathop{\ooalign{\scalebox{1.1}{$\Diamond$} \cr \kern0ex
    \raisebox{0.25ex}{\scalebox{0.9}{$-$}}} \kern-0.8ex}}
\newcommand{\Geom}{\ensuremath{\mathsf{Geom}}\xspace}
\begin{document}

\title{Cutting Diamonds}
\subtitle{Temporal DLs with Probabilistic Distributions over Data}
%
\author{Alisa Kovtunova \and Rafael Pe\~naloza}

\institute{KRDB Research Centre, Free University of Bozen-Bolzano, Italy. \email{\{alisa.kovtunova,rafael.penaloza\}@unibz.it}}

\maketitle
\begin{abstract}
Recent work has studied a probabilistic extension of the temporal logic LTL that refines the eventuality (or diamond) constructor with a probability distribution on when will this eventuality be satisfied. In this paper, we adapt this notion to a well established temporal extension of DL-Lite, allowing the new probabilistic constructor only in the ABox assertions. We investigate the satisfiability problem of this new temporal DL over equiparametric geometric distributions. 

%
\end{abstract}

\section{Introduction}
Combinations of DLs with temporal formalisms have been widely investigated since the early work of 
\cite{DBLP:conf/aaai/Schmiedel90}; we refer the reader 
to~\cite{DBLP:books/cu/Demri2016,DBLP:conf/time/LutzWZ08,Gabbay2003-GABMML,DBLP:conf/time/ArtaleKKRWZ17} 
for detailed surveys of the area. Despite the different visions of the problem presented, logical theories that encode a 
domain of interest are always represented by factual statements. However, speaking about the future by itself can imply 
(probabilistic) uncertainty. 

For example, insurance companies estimate the probability of the insured event in the duration period of the policy 
based in a number of factors; e.g., for a life insurance, they consider health condition, number of children, habits, 
sun radiation in home region, etc. This defines the extent of monthly payment for a customer. If one uses a 
classical temporal DL, one can only express that everyone dies 
($\textsf{LivingBeing} \sqsubseteq \Diamond \textsf{Dead}$), and miss the golden goose of insurers.
 
There is also a large pool of proposals for probabilistic DLs, e.g., \cite{DBLP:journals/semweb/RiguzziBLZ15,DBLP:conf/dlog/SazonauS15,DBLP:journals/jair/Gutierrez-Basulto17,DBLP:conf/sum/PenalozaP17}, that differ widely in many fundamental aspects, like the 
way in which probabilities are used, in the syntax, in the chosen semantics, and in the possible application. We refer to~\cite{klinov} for a (now slightly outdated) survey. 
There 
are two main views of probability~\cite{DBLP:journals/ai/Halpern90}, statistical and subjective. While the statistical 
view considers a probability distribution over a domain that specifies the probability for an individual in the domain to 
be randomly picked, we choose the subjective view, which specifies the probability distribution over a set of possible 
worlds.
In our case, a world would be a possible evolution of a system; that is, a standard temporal DL interpretation. 
The authors of~\cite{DBLP:journals/jair/Gutierrez-Basulto17} argue the subjective semantics provides an appropriate 
modelling for probabilistic statements about individuals, e.g., the statement ``there's at least 80\% chance of not having an earthquake tomorrow'' implies that an earthquake will either occur or will not occur. Thus, in the set of possible worlds, there 
are some structures in which an earthquake happens and others in which it does not. This type of uncertainty can also be called epistemic, because it regards probabilities as the degree of our belief.

This paper presents the language~\ptdl, which, to the best of our knowledge, is a first probabilistic extension of temporal description logics 
and aims at closing the gap between probabilistic and temporal extensions of lightweight DLs.
We propose a new view on one of the pillars of the linear temporal logic LTL, the eventuality (or 
diamond) constructor, which expresses that some property will hold at some point in the 
future. The specification of the diamond operator is nondeterministic and, thus, can be too rough; indeed, with the basis of actual 
life experience one often has an
idea---albeit uncertain---about \emph{when} the property may hold. Employing basic notions from statistical 
analyses, we combine an ontology of a well established temporal extension of DL-Lite proposed 
in~\cite{DBLP:journals/tocl/ArtaleKRZ14} with temporal data specified by the geometric distribution with parameter $p \geq \frac{1}{2}$. 

This logic allows us to reason over time dimension with uncertain knowledge. For example, if one builds an 
earthquake-resistant house, one wants to be sure an earthquake does not ruin it before building bricks and concrete blocks are properly reinforced. Also, if 
local hospitals take vacation on Sundays, an earthquake in the construction camp this day of the week inflicts 
heavier losses.

For the resulting temporal lightweight description logic with distributions, aka \ptdl, we present 
the formal semantics underlying the language, introduce the probabilistic formalism realised by means of the probabilistic constraint $\dd_\delta$, called the \emph{distribution
eventuality}, and investigate the satisfiability 
problem. Consistency can be checked by a deterministic algorithm using exponential space in the size of input. An open question is whether the result can be improved to match the \coNExpTime{} upper bound obtained in~\cite{ourKR} for the propositional temporal formula 
with only one instance of the distribution eventuality $\dd_\delta$. 
This new refined diamond constructor includes a 
discrete probability distribution $\delta$ that can be used to specify the likelihood of observing the property of interest, for the 
first time, at each possible point in time.

\section{Preliminaries}

We briefly introduce the basics of probability theory and temporal extensions of description logics.

\subsection{Probability Theory}
We start by providing the basic notions of probability needed for this paper. For a deeper study on probabilities,
we refer the interested reader to~\cite{Bill-95}.
Let $\Omega$ be a set called the \emph{sample space}. A \emph{$\sigma$-algebra} over $\Omega$ is a class 
\Fmc of subsets of $\Omega$ that contains the empty set, and is closed under complements and under 
countable unions. A \emph{probability measure} is a function $\mu:\Fmc\to [0,1]$ such that $\mu(\Omega)=1$, 
and for any countable collection of pairwise disjoint sets $E_i\in\Fmc$, $i\ge 1$, it holds that
$
\mu(\bigcup_{i=1}^\infty E_i)=\sum_{i=1}^\infty \mu(E_i).
$
The probability of a set $E\in\Fmc$ is $\int_{\omega\in E}\mu d\omega$, where the integration is made w.r.t.\
the measure $\mu$. 

A usual case is when $\Omega$ is the set $\reals$ or all real numbers, and \Fmc is the standard Borel 
$\sigma$-algebra over \reals; that is, the smallest $\sigma$-algebra containing all open intervals in \reals.
In this case, $\mu$ is called a \emph{continuous} probability measure, and the integration defining the probability
of a set $E$ corresponds to the standard Riemann integration.

If $\Omega$ is a countable (or finite) set, the standard $\sigma$-algebra is formed by the power set of 
$\Omega$, and a probability measure $\mu$ is uniquely determined by a function $\mu:\Omega\to[0,1]$. Given
a set $E\subseteq\Omega$, 
$\mu(E)=\sum_{\omega\in E} \mu(\omega)$; that is, the probability of a set is the sum of the probabilities of the
elements it contains. In this case, $\mu$ is called a \emph{discrete} probability measure. In addition, if $\mu(\omega)>0$ for all $\omega\in\Omega$, then $\mu$ is \emph{complete}. In contrast to our definition, in a classical Kolmogorov probability space the completeness of $\mu$ only requires it to be necessarily defined for every $\omega\in\Omega$.    
When $\Omega$ is the set of all natural numbers \nats, we specify the distribution $\mu$ as a function
$\mu:\nats\to[0,1]$. 

A simple example of a complete discrete distribution is the \emph{geometric} distribution. 
The geometric distribution $\Geom(p)$ with a parameter (the probability of 
success) $p\in(0,1)$ is defined, for every $i\in\nats$, by $\mu(i)=(1-p)^{i-1}p$. This distribution describes the 
probability of 
observing the first success in a repeated trial of an experiment at time $i$. 
Returning back to the insurance example, according to~\cite{Slud2006ActuarialMA}, an attained integer age at death has the geometrical distribution if the force of mortality were constant at all ages.

\subsection{Temporal DLs}
Temporal \DL{} logics are extensions of standard \DL{} description logics introduced 
by~\cite{DBLP:journals/jar/CalvaneseGLLR07,DBLP:journals/jair/ArtaleCKZ09}. Similarly 
to~\cite{DBLP:conf/time/ArtaleKLWZ07}, since we want to reason about the future, we also allow applications of 
the discrete unary future operators $\nxt$ (``in the next time point''), $\bx$ (``always in the future''), $\Diamond$ (``eventually in the future'') and the binary operator 
$\U$ (``until'') to basic concepts. We use the non-strict semantics for $\U$, $\Diamond$ and 
$\bx$ in the sense that their semantics includes the current moment of time. 

Formally,  \ptdl{} contains \emph{individual names} $a_0,a_1,\dots$, \emph{concept names} 
$A_0,A_1,\dots$, \emph{flexible role names} $P_0,P_1,\dots$ and \emph{rigid role names} $G_0,G_1,\dots$.  
\emph{Roles} $R$, \emph{basic concepts} $B$ and \emph{concepts} are defined by the grammar
\begin{eqnarray*}
R \ \ ::= \ \  S \ \ \mid \ \ S^-, \qquad  S\ \ ::= \ \  P_i \ \ \mid \ \ G_i,\qquad B \ \ ::=\ \ 
\bot \ \ \mid\ \
A_i \ \ \mid\ \
\exists R, \\
C \ \ ::=\ \ B  \ \ \mid\ \ \neg C  \ \ \mid\ \ \nxt C  \ \ \mid\ \ \Diamond C  \ \ \mid\ \ \bx C  \ \ \mid\ \ C \U C  \ \ \mid\ \ C\sqcap C.
\end{eqnarray*}
%
We denote the nesting of temporal operators as the superscription of the temporal operator from the set 
$\{\bx, \nxt\}$; that is, $\bx^0A=\nxt^0A=A$, and $\bx^{n+1}A=\bx\bx^nA$, $\nxt^{n+1}A=\nxt\nxt^nA$.

%
A temporal \emph{concept inclusion} (CI) takes the form
%
$C_1 \sqsubseteq C_{2}$,
%
while temporal \emph{role inclusions} (RI) are of the form
%
$R_1  \sqsubseteq R_{2}$.
As usual $C_1 \equiv C_2$ abbreviates $C_1 \sqsubseteq C_2$ and $C_2 \sqsubseteq C_1$. 
%
All CIs and RIs are assumed to hold globally (over the whole timeline). 
Note that an CI can express that a basic concept is \emph{rigid}, i.e., interpreted in the same way at every point of time.
%
%
%
%
%
A temporal \emph{TBox} $\T$ (resp. \emph{RBox} $\R$) is a finite
set of temporal CIs (resp., RIs). Their union $\TO= \T\cup \R$ is called a temporal \emph{ontology}. 
%
Since the non-strict operators are obviously definable in terms of the strict ones, which do not include the current moment, temporal CIs and RIs are expressible in terms of \PSpace-complete $T_{\U\Since}\DL_\textit{bool}^{\smash{\mathcal{N}}}$ logic~\cite{DBLP:journals/tocl/ArtaleKRZ14}. 

A \emph{temporal interpretation}
is a pair $\I =(\Delta^\I,\cdot^{\I(n)})$, where $\Delta^\I \ne\emptyset$ and
$$\I(n)=(\Delta^\I, a_0^\I,\dots ,A_0^{\smash{\I(n)}}, \dots,P_0^{\smash{\I(n)}},\dots , G_0^{\smash{\I}},\dots ,)$$
contains a standard DL interpretation for each time instant $n \in \nats$ of the ordered set \mbox{$(\nats,<)$}, that
is, $a_i^{\I}\in \Delta^\I$, $A_i^{\smash{\I(n)}}\subseteq \Delta^\I$
and $P_i^{\smash{\I(n)}},G_i^{\smash{\I}}\subseteq\Delta^\I\times\Delta^\I$. The domain $\Delta^\I$ and the interpretations $a_i^\I\in \Delta^\I$ of the individual names and $G_i^{\smash{\I}}\subseteq\Delta^\I\times\Delta^\I$ of rigid role names are the same for all
$n\in \nats$, thus, we adopt the \emph{constant domain assumption}. However, we do not assume the unique name, since neither functionality nor number restrictions are applied to this logic.

The DL and temporal constructs are interpreted in $\I(n)$ as follows:  
\begin{align*} 
(R^-)^{\I(n)} & {} = \{ (x,y) \mid (y,x) \in R^{\I(n)} \},\\
 (\exists R)^{\I(n)} & {} = \bigl\{ x \mid  (x,y) \in R^{\I(n)}, \text{ for some } y \bigr\},\\
 (\neg C)^{\I(n)} & {} = \Delta^\I \setminus C^{\I(n)},\\
 (C_1 \sqcap C_2)^{\I(n)} & {} = C_1^{\I(n)} \cap C_2^{\I(n)},\\
 (\Diamond C)^{\I(n)} & {} = \bigcup_{k\geq n} C^{\I(k)},\\
 (\Box C)^{\I(n)} & {} = \bigcap\nolimits_{k \geq n} C^{\I(k)},\\
 (\nxt C)^{\I(n)} & {} = C^{\I(n+1)}, \\
 (C_1\U C_2)^{\I(n)} & {} = \bigcup_{k\geq n}\left( C_2^{\I(k)}\cap \bigcap\nolimits_{k>l\geq n} C_1^{\I(l)}\right). 
\end{align*}
As usual, $\bot$ is interpreted by $\emptyset$ and $\top$ by $\Delta^{\smash{\I}}$ for concepts and by 
$\Delta^{\smash{\I}}\times\Delta^{\smash{\I}}$ for roles. 
As mentioned before, CIs and RIs are interpreted in $\I$ \emph{globally} in the sense that they hold in $\I$ if 
$C_1^{\smash{\I(n)}} \subseteq C_{2}^{\smash{\I(n)}}$ and $R_1^{\smash{\I(n)}}\subseteq R_2^{\smash{\I(n)}}$
hold  for all $n \in \nats$.
%
Given an inclusion $\alpha$ and a temporal interpretation $\I$, we write $\I\models\alpha$ if $\alpha$ holds in $\I$.

\subsection{Distributions of Data Instances over Time} 

In temporal variants of \DL{}, instances from the ABox can be associated with temporal constructors as well.
In our logic \ptdl{}, in addition to the standard constructors used also in the ontology, we allow a probabilistic
constructor that provides a distribution of the time needed until the assertion is observed. Formally,
a \ptdl{} \emph{ABox} (or \emph{data instance}) is a finite set $\A$ of atoms of the form
\begin{align*}
\nxt^{n} A(a), \qquad &&\nxt^{n} \dd_\delta A(a), && \nxt^{n} \neg A(a), \\
\nxt^{n} R(a,b), \qquad &&\nxt^{n} \dd_\delta R(a,b) && \nxt^{n} \neg R(a,b),
\end{align*}
where $a$, $b$ are individual names,
$n \in \nats$ and $\delta$ is a complete distribution over \nats. 
The new constructor $\dd_\delta\theta(\av)$, for $\theta=A$, $\av=a$ and $\theta=R$, $\av=(a,b)$, expresses that 
the time until the event $\theta(\av)$ is first observed has distribution $\delta$. 
We denote by $\ind(\A)$ the set of individual names in
$\A$. 
A \ptdl{} \emph{knowledge base} (KB) $\K$ is a pair $(\TO,\A)$,
where $\TO$ is a temporal ontology and $\A$ a \ptdl{} ABox.

To render the probabilistic properties, the semantics of \ptdl{} is based on the \emph{multiple-world} approach. 
A \ptdl interpretation is a pair $\Pmc=(\Imc,\mu)$, where $\Imc$ is a set of temporal interpretations $\I$ 
and $\mu$ is
a probability distribution over $\Imc$. 
Given a set of temporal interpretations $\Imc$, a concept name or a role $\theta$, individual names $\av$ and 
$n\in\nats$, let $\Imc_{\av,n}^\theta:=\{\I\in\Imc\mid \av^{\I}\in \theta^{\I(n)}\}$. For $\Pmc=(\Imc,\mu)$, we interpret the probabilistic construct in $\I\in \Imc$ at the time point $n$ over the set of individual names as 
\begin{align} 
(\dd_\delta\theta)^{\I(n)} = \{ \av^\I \; {\mid} 
\;\, \mu(\Imc_{\av,n+i}^\theta{\setminus}\bigcup_{j=0}^{i-1}\Imc_{\av,n+j}^\theta)=\delta(i) 
	\text{ for all }i{\ge} 0 \}. \label{eq:probab}
\end{align}
In contrast to~\cite{ourKR}, we do not require the unique constant domain for all interpretations in the set $\Imc$. 

$\Pmc=(\Imc,\mu)$ is a \emph{model} of $\K=(\TO,\A)$ and write
$\Pmc\models \K$ if, for any $\I\in \Imc$, 
\begin{itemize}
\item all concept and role inclusions from $\TO$ hold in $\I$, i.e. $\I\models \alpha$ for all $\alpha\in \TO$;
\item $a^{\smash{\I}}\in A^{\smash{\I(n)}}$ for $\nxt^n A(a)\in \A$, and $(a^{\smash{\I}},b^{\smash{\I}})\in R^{\smash{\I(n)}}$ for $\nxt^n R(a,b)\in \A$;\\
$a^{\smash{\I}}\not\in A^{\smash{\I(n)}}$ for $\nxt^n \neg A(a)\in \A$, and $(a^{\smash{\I}},b^{\smash{\I}}) \not\in R^{\smash{\I(n)}}$ for $\nxt^n \neg R(a,b)\in \A$;
\item $\av^{\smash{\I}} \in (\dd_\delta \theta)^{\smash{\I(n)}}$, for all $\nxt^n\dd_\delta \theta(\av)\in \A$. 
\end{itemize}
Similarly to the standard case, a KB $\K=(\TO,\A)$ is \emph{consistent} if it has a model.
As it is obvious from our semantics, we are using the standard open-world assumption from DLs.

There are several reasons why we allow the probabilistic operator only in ABox instances: (i)~semantic: each 
model of $\K$ can differ in the anonymous part, while the definition~\eqref{eq:probab} requires common object 
names for all temporal interpretations in $\Imc$; (ii)~computational: \DL{} allows infinitely many anonymous objects; 
bounding the probability to the ABox objects ensures existence of a model with a finite number of terms, i.e., concept 
names or roles, prefixed with $\dd_\delta$; (iii)~even leaving out the anonymous part of \ptdl{}, CIs and RIs can also 
express infinitely many times repeated events for ABox objects, e.g., $C \sqsubseteq \nxt^2 C$.   

\begin{remark}
The restrictions in the syntax for avoiding uncountable models are mainly of a technical nature: describing the distribution $\mu$ over an uncountable set of temporal interpretations needs measure-theoretic notions; and verifying the existence of uncountable models requires more advanced machinery. 
\end{remark}
The interpretation $\Pmc=(\Imc,\mu)$ is \emph{countable} if the set $\Imc$ contains countably many temporal 
interpretations. In~\cite{ourKR} it was shown that the combination $\Box\dd_\delta$ can only be satisfied by
an uncountable interpretation. \ptdl KBs allow the constructor \dd only in ABox instances, and these occurrences need only a finite number of time points to be satisfied. Hence, \ptdl{} has the countable-model property.

\begin{theorem}\label{th:countability}
If a \ptdl{} KB $\K$ is satisfiable, it has a countable model.
\end{theorem}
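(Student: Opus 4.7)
The plan is to exploit that $\A$, being finite, contains only finitely many probabilistic assertions; I denote them $\nxt^{n_\ell}\dd_{\delta_\ell}\theta_\ell(\av_\ell)$ for $\ell=1,\dots,m$. By~\eqref{eq:probab}, the $\ell$-th assertion constrains only the distribution of the $\nats$-valued waiting time until $\theta_\ell(\av_\ell)$ is first observed after time $n_\ell$. Jointly, these constraints are determined by a random vector ranging over the countable set $\nats^m$, which suggests that it should suffice to keep one representative temporal interpretation per realisation of this vector.

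Starting from any model $\Pmc=(\Imc,\mu)$ of $\K$, for each $\bar\imath=(i_1,\dots,i_m)\in\nats^m$ I would consider the event
\[
F_{\bar\imath} \;=\; \bigcap_{\ell=1}^{m}\Bigl(\Imc_{\av_\ell,n_\ell+i_\ell}^{\theta_\ell}\setminus\bigcup_{j=0}^{i_\ell-1}\Imc_{\av_\ell,n_\ell+j}^{\theta_\ell}\Bigr),
\]
i.e.\ the set of $\I\in\Imc$ in which, for every $\ell$, the first observation of $\theta_\ell(\av_\ell)$ after $n_\ell$ happens exactly at $n_\ell+i_\ell$. Disjointness of the sets $\Imc_{\av_\ell,n_\ell+i}^{\theta_\ell}\setminus\bigcup_{j<i}\Imc_{\av_\ell,n_\ell+j}^{\theta_\ell}$ for different $i$ makes $\{F_{\bar\imath}\}_{\bar\imath\in\nats^m}$ pairwise disjoint, and since each $\delta_\ell$ is a probability distribution their union covers $\Imc$ up to a $\mu$-null set. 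Using countable choice I would pick one representative $\I_{\bar\imath}\in F_{\bar\imath}$ for every $\bar\imath$ with $\mu(F_{\bar\imath})>0$, set $\Imc'=\{\I_{\bar\imath}\mid\mu(F_{\bar\imath})>0\}$, and equip it with the discrete measure $\mu'(\I_{\bar\imath})=\mu(F_{\bar\imath})$. Since $\nats^m$ is countable, the candidate $\Pmc'=(\Imc',\mu')$ is countable by construction.

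I would then verify $\Pmc'\models\K$. Each $\I_{\bar\imath}$ lies in $\Imc$ and hence inherits satisfaction of the TBox, the RBox, and the non-probabilistic part of $\A$. For the $\ell$-th probabilistic assertion and any $k\in\nats$, the set $\Imc'^{\theta_\ell}_{\av_\ell,n_\ell+k}\setminus\bigcup_{j=0}^{k-1}\Imc'^{\theta_\ell}_{\av_\ell,n_\ell+j}$ is exactly $\{\I_{\bar\imath}\in\Imc'\mid i_\ell=k\}$, whose $\mu'$-mass is $\sum_{\bar\imath:\,i_\ell=k}\mu(F_{\bar\imath})=\delta_\ell(k)$ by the partition property combined with~\eqref{eq:probab} for~$\Pmc$, which is precisely what~\eqref{eq:probab} requires for~$\Pmc'$.

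The only delicate point I anticipate is the implicit measurability of the sets $\Imc^{\theta}_{\av,n}$ in the $\sigma$-algebra of $\Imc$; this is, however, already presupposed by~\eqref{eq:probab} itself, so it adds no new hypothesis. Everything else is routine once the $\nats^m$-indexed partition is in place, and the argument crucially uses the finiteness of~$\A$ together with the fact that each $\delta_\ell$ is a discrete distribution over the countable set $\nats$.
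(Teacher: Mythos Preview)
Your proposal is correct and follows essentially the same approach as the paper's proof: the sets $F_{\bar\imath}$ you define are exactly the paper's $\J^{k_1,\dots,k_{\avec{d}}}_{\A,\Imc}$ from~\eqref{eq:frac-interpretation}, and both arguments proceed by selecting one representative per cell of this $\nats^m$-indexed partition and assigning it the $\mu$-mass of its cell. The only cosmetic difference is that the paper also retains representatives from nonempty cells of $\mu$-measure zero, whereas you keep only positive-mass cells; either choice works, and your version is arguably cleaner.
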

\begin{proof}
%
Let $\Pmc=(\Imc,\mu)$ be a model of $\K$, and $\avec{d}$ be the number of all \dd-data instances appearing in 
$\A$. If $\avec{d}=0$, then $\K$ does not contain any \ptdl instances, and, for every $\I\in\Imc$, an interpretation $\Pmc_\I=(\{\I\},\mu_{\I})$, where $\mu_{\I}(\{\I\})=1$, is a model of $\K$.
Otherwise, by semantics, for every instance $\nxt^{n_i}\dd_{\delta_i}\theta_i(\av_i)\in \A$, where $1\leq i\leq \avec{d}$, we have 
\begin{align}\label{eq:joint-delta}
\mu(\Imc_{\av_{i},n_{i}+k}^{\theta_i}\setminus\bigcup_{j=0}^{k-1}\Imc_{\av_i,n_i+j}^{\theta_i})=\delta_i(k),
\end{align}
for any $k\geq 0$. Since the domain of probability functions, as a $\sigma$-algebra, is closed under countable \emph{intersections}, the joint function $\mu(\J^{k_1,\dots,k_{\avec{d}}}_{\A,\Imc})$ is also defined for the \ptdl model $\Pmc=(\Imc,\mu)$, where 
\begin{align}
\J^{k_1,\dots,k_{\avec{d}}}_{\A,\Imc}=\bigcap_{i=1}^{\avec{d}} \left( \Imc_{\av_{i},n_{i}+k_{i}}^{\theta_i}\setminus\bigcup_{j=0}^{k_{i}-1}\Imc_{\av_i,n_i+j}^{\theta_i} \right). \label{eq:frac-interpretation}
\end{align}
Note that $\sum_{\{k_1,\dots,k_{\avec{d}}\}\in \nats^{\avec{d}}} \mu(\J^{k_1,\dots,k_{\avec{d}}}_{\A,\Imc})=1$.

Now from a (possibly) uncountable $\Pmc$ we build a new countable \ptdl interpretation $\Pmc'=(\Imc',\nu)$ by 
assigning an appropriate weight to a representative interpretation $\I$ of a set 
$\J^{k_1,\dots,k_{\avec{d}}}_{\A,\Imc}$ for every $k_1,\dots,k_{\avec{d}}\geq 0$.

Initially we assume $\Imc'=\emptyset$. For all $k_1,\dots,k_{\avec{d}} \geq 0$, we consider a subset 
$\J_{\A,\Imc}^{k_1,\dots,k_{\avec{d}}}\subseteq\Imc$ defined by~\eqref{eq:frac-interpretation}. If 
$\J_{\A,\Imc}^{k_1,\dots,k_{\avec{d}}}=\emptyset$ and $\mu(\J^{k_1,\dots,k_{\avec{d}}}_{\A,\Imc})=0$, then we 
assign $\nu(\J^{k_1,\dots,k_{\avec{d}}}_{\A,\Imc'})=\nu(\emptyset)=0$. Otherwise, we pick any interpretation 
$\I\in \J^{k_1,\dots,k_{\avec{d}}}_{\A,\Imc}$ as a representative and set $\Imc'=\Imc'\cup \I$ with 
$\nu(\J^{k_1,\dots,k_{\avec{d}}}_{\A,\Imc'})=\nu(\I)=\mu(\J^{k_1,\dots,k_{\avec{d}}}_{\A,\Imc})$. In the general case, 
the last equality, $\nu(\I)$, can be equal to $0$.

As $\avec{d}$ is finite and at each step of the procedure we add at most one interpretation, $\Pmc'$ is countable. In 
order to show $\Pmc'\models \K$, we notice that, for any axiom $\alpha\in\TO$, we have $\I\models \alpha$, for any 
$\I\in \Imc$. Since $\Imc'\subseteq\Imc$, we have the statement, $\Pmc'\models \alpha$. A similar argument can be 
applied to the \dd-free ABox assertions.

Consider an instance $\nxt^n\dd_\delta \theta(\av)\in \A$. By construction of $\Pmc'=(\Imc',\nu)$, for any $k\geq 0$, 
it holds that
\begin{align*}
\nu(\{\I\in\Imc'{\mid} \av^{\I}\in \theta^{\I(n+k)}\}{\setminus}\bigcup_{j=0}^{k-1}\{\I\in\Imc'{\mid} \av^{\I}\in \theta^{\I(n+j)}\}) = {} 
&\sum_{\mathclap{\{{k_1,\dots,k_{\avec{d}}\}\setminus \{k\}}\in\nats^{\avec{d}-1}}} \nu(\J^{k_1,\dots,k_{\avec{d}}}_{\A,\Imc'}) \\ = {} & 
\sum_{\mathclap{\{{k_1,\dots,k_{\avec{d}}\}\setminus \{k\}}\in\nats^{\avec{d}-1}}} \mu(\J^{k_1,\dots,k_{\avec{d}}}_{\A,\Imc})=  
\delta(k).
\end{align*} 
By semantics, $\Pmc'\models\nxt^n\dd_\delta\theta(\av)$. Thus, the \ptdl interpretation $\Pmc'=(\Imc',\nu)$ is a countable model of $\K$.
%
\qed
\end{proof}

\section{Deciding Satisfiability}
We now focus on the problem of deciding whether a given \ptdl KB $\K$ is satisfiable.
%
The semantics of the \dd-operator given by~\eqref{eq:probab} in a combination with a nondeterministic temporal operator can give interesting results. 
\begin{example}
Consider the KB $\K=(\{B \equiv \Diamond A\}, \{\dd_\delta B(a)\})$ for a complete distribution $\delta$. $\K$ is
unsatisfiable, since, for any temporal interpretation $\I$, the statement $\I,1\models\Diamond A$ implies 
$\I,0\models \Diamond A$, which contradicts the semantics of \dd-operator~\eqref{eq:probab} with $i=1$, by which, for any model $(\Imc,\mu)$ of the KB,
$$\mu(\Imc_{a,1}^{\Diamond A}\setminus\Imc_{a,0}^{\Diamond A})=\delta(1)>0.$$
%
Namely, one should not be able to say that there is any positive probability of satisfying $\neg\Diamond A(a)$ in a 
time point $0$, and $\Diamond A(a)$ in any later time $m>0$.
\end{example}
%
%

\subsection{Multidimensional Matrix}
To represent a model $\Pmc=(\Imc,\mu)$ of a \ptdl{} KB $(\TO,\A)$ with $\avec{d}$ instances of $\dd_p$\mbox{-}data assertions in the ABox,
%
we introduce a $\avec{d}$-dimension infinite matrix $M_\Pmc$ with elements from the range of $\mu$ such that 
$M_\Pmc(k_1,\dots,k_{\avec{d}})=\mu(\J^{k_1,\dots,k_{\avec{d}}}_{\A,\Imc})$, where 
$\J^{k_1,\dots,k_{\avec{d}}}_{\A,\Imc}$ is defined by \eqref{eq:frac-interpretation}. Notice that the mapping from a 
model to a matrix is surjective: similarly to the proof of Theorem~\ref{th:countability}, the mapping merges equivalent 
(from the point of view of \dd{} unravelling) temporal interpretations together. Stepping away from exact \ptdl{} 
interpretations, by the same argument as in Theorem~\ref{th:countability}, we show the following result.
\begin{theorem}\label{th:matrix}
A \ptdl{} KB $\K$ is satisfiable iff there is a matrix $M$ of elements from the interval $[0,1]$ such that
\begin{itemize}
\item for any $\{k_1,\dots,k_{\avec{d}}\}\in \nats^{\avec{d}}$, if $M(k_1,\dots,k_{\avec{d}})>0$ then the temporal KB 
$(\TO,\A_{k_1,\dots,k_{\avec{d}}})$, where $\A_{k_1,\dots,k_{\avec{d}}}$ is the \dd-free ABox
\begin{equation}
\label{eq:matrix-formula}
\bigcup_{i=1}^{\avec{d}} \{ \bigcup_{j=0}^{{k_i}-1}  \nxt^{n_i+j} \neg\theta_i (\av_i) \cup \nxt^{n_i+k_i}\theta_i(\av_i) \} \cup \A \setminus \{\bigcup_{i=1}^{\avec{d}} \nxt^{n_i}\dd_{\delta_i} \theta_i(\av_i) \}, 
\end{equation}
is satisfiable; and
\item for any $1\leq i\leq \avec{d}$, 
\begin{equation}\label{eq:matrix-sum}
\sum_{\{{k_1,\dots,k_{\avec{d}}\}\setminus \{k_i\}}\in\nats^{\avec{d}-1}}  M(k_1,\dots,k_{\avec{d}})=\delta_i(k_i)
\end{equation}
\end{itemize}
\end{theorem}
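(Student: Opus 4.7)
The biconditional splits into two directions, both patterned on Theorem~\ref{th:countability}. For the ``only if'' direction, given a model $\Pmc=(\Imc,\mu)$ of $\K$, I set $M(k_1,\dots,k_{\avec{d}}):=\mu(\J^{k_1,\dots,k_{\avec{d}}}_{\A,\Imc})$. Each $\J^{k_1,\dots,k_{\avec{d}}}_{\A,\Imc}$ from~\eqref{eq:frac-interpretation} lies in the $\sigma$-algebra on which $\mu$ is defined (closure under countable intersections and complements), so $M(k_1,\dots,k_{\avec{d}})\in[0,1]$. If the entry is strictly positive, then $\J^{k_1,\dots,k_{\avec{d}}}_{\A,\Imc}\neq\emptyset$, and any $\I$ in it satisfies $\TO$, the \dd-free part of $\A$, and, by the very definition of $\J^{k_1,\dots,k_{\avec{d}}}_{\A,\Imc}$, also the extra assertions $\nxt^{n_i+k_i}\theta_i(\av_i)$ and $\nxt^{n_i+j}\neg\theta_i(\av_i)$ for $j<k_i$; hence $\I\models(\TO,\A_{k_1,\dots,k_{\avec{d}}})$. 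The marginal identity~\eqref{eq:matrix-sum} is exactly the joint-to-marginal computation already used in Theorem~\ref{th:countability}: the sets indexed by distinct tuples are pairwise disjoint, so summing over the remaining coordinates recovers the $\dd$-probability prescribed by~\eqref{eq:probab}.

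For the converse, given a matrix $M$ satisfying both conditions, I build a countable model $\Pmc=(\Imc,\nu)$ by picking, for each $\avec{k}=(k_1,\dots,k_{\avec{d}})\in\nats^{\avec{d}}$ with $M(\avec{k})>0$, one representative $\I_{\avec{k}}\models(\TO,\A_{\avec{k}})$ (guaranteed by the first condition), letting $\Imc$ collect all such $\I_{\avec{k}}$, and setting $\nu(\I_{\avec{k}}):=M(\avec{k})$, extended to subsets by summation. That $\nu$ is a probability measure follows from~\eqref{eq:matrix-sum}: summing the identity over $k_i\in\nats$ yields $\sum_{\avec{k}\in\nats^{\avec{d}}}M(\avec{k})=1$, because $\delta_i$ is complete. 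Each $\I_{\avec{k}}$ satisfies $\TO$ and every $\dd$-free assertion of $\A$, so those parts of $\K$ hold on all of $\Imc$. For every $\nxt^{n_i}\dd_{\delta_i}\theta_i(\av_i)\in\A$ and $k\ge 0$, the set $\Imc^{\theta_i}_{\av_i,n_i+k}\setminus\bigcup_{j<k}\Imc^{\theta_i}_{\av_i,n_i+j}$ consists, by the construction of the $\A_{\avec{k}}$, of exactly those $\I_{\avec{k}}$ whose $i$-th coordinate equals $k$, so its $\nu$-measure coincides with the marginal sum~\eqref{eq:matrix-sum} and equals $\delta_i(k)$, as required by~\eqref{eq:probab}.

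The main obstacle is the interaction among the $\avec{d}$ distinct $\dd$-assertions: a single chosen representative $\I_{\avec{k}}$ must simultaneously witness the correct first-occurrence time $k_i$ for \emph{every} $i$. This is precisely why the matrix is indexed by a full $\avec{d}$-tuple and why $\A_{\avec{k}}$ in~\eqref{eq:matrix-formula} bundles all $\avec{d}$ timing constraints into one classical KB; keeping $\avec{d}$ separate one-dimensional tables per assertion would fail to enforce a coherent joint law. Once this joint indexing is in place, disjointness of the cells makes the marginal computation straightforward, and no measure-theoretic subtlety arises beyond the discrete setting already handled in Theorem~\ref{th:countability}.
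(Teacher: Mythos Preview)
Your proposal is correct and matches the paper's intended argument: the paper does not spell out a proof for Theorem~\ref{th:matrix} but simply states that it follows ``by the same argument as in Theorem~\ref{th:countability}'', and your two directions reproduce exactly that argument (setting $M(\avec{k})=\mu(\J^{\avec{k}}_{\A,\Imc})$ for one direction, and picking one representative per positive cell for the other).
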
 
We consider matrix entries starting with zero; i.e., $M(0,\dots,0)$ is the first element of the matrix $M$.
%

Theorem~\ref{th:matrix} allows us to avoid providing explicitly a model for a satisfiable \ptdl{} KB, since the matrix 
ensures it existence. But it does not provide an efficient solution or even an algorithm for the satisfiability problem, since it requires an infinite matrix. However, as each non-zero element corresponds to a 
classical temporal KB, we use properties of the probabilistic distribution and establish a periodical property of matrix 
entries to bound the size of the matrix. 

\subsection{Bernoulli Processes}
So far we have introduced the \ptdl{} KB in general terms. In the following we
focus on the special case where all used distributions describe a Bernoulli process; that is, we consider only
the geometric distribution, $\Geom(p)$ with $0<p<1$; notice that this
distribution is complete. To simplify the notation, we will simply write $\dd_p$ for this distribution. 
%
The following example demonstrates the semantics of the operator $\dd_p$.

\begin{example}
Let $\delta$ be the geometric distribution $\Geom(\frac{1}{2})$.
The \ptdl{} ABox $\{\dd_{\frac{1}{2}} H(a), \nxt\dd_{\frac{1}{2}} T(a)\}$ describes two experiments observing a 
repeated flip of
the coin $a$, where $H$ means that the coin landed {\sf{heads}}, and $T\equiv \neg H$ that it landed {\sf{tails}}.
\end{example}
The geometric distribution $\Geom(p)$ has some valuable to us properties:
\begin{enumerate}
\item for any $j>i\geq 0$, we have $\Geom(p)(i) > \Geom(p)(j)$; and
\item\label{prop:second} for any $p > \frac{1}{2}$ and $i\geq 0$, $\Geom(p)(i) > \sum_{j>i} \Geom(p)(j)$. 
If $p=\frac{1}{2}$, then this inequality becomes an equality. 
\end{enumerate}
We also assume that, for all ABox instances, the parameter of the geometric distribution $p$ is unique, 
$\frac{1}{2}\leq p<1$. To simplify presentation, particularly matrix-wise, we consider only the case of $\avec{d}=2$. 
For an arbitrary finite $\avec{d}$, the reasoning we provide below is the same, but requires the use of more 
cumbersome notation.  

We start with important properties of $\Geom(p)$ matrices for $\frac{1}{2}\leq p<1$. 
\begin{lemma}\label{lem:properties}
For a satisfiable \ptdl{} KB $\K=(\TO,\A)$ with two $\dd_p$-instances of $\Geom(p)$, $\frac{1}{2}\leq p<1$ in the ABox,
and any matrix $M$ satisfying Conditions~\eqref{eq:matrix-formula} and~\eqref{eq:matrix-sum} of $\K$, we have
\begin{enumerate}
\item\label{item:lem:<} if $p>\frac{1}{2}$, then, for any $k\in \nats$, the
set 
\begin{equation}\label{eq:L(k)}
L(k)=\{M(0,k),M(1,k),\dots, M(k,k),M(k,k-1),\dots M(k,0)\}
\end{equation} 
contains at least one non-zero element,
\item\label{item:lem:=} if $p=\frac{1}{2}$, then
there exists at most one $k\in \nats$ such that all elements $L(k)$ are zeroes.
\end{enumerate}
\end{lemma}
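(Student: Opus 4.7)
The plan is to exploit both marginal conditions in~\eqref{eq:matrix-sum} together with the two stated properties of $\Geom(p)$. The key observation is that if all entries of $L(k)$ vanish, then row $k$ of $M$ is empty at every column $j\le k$, so the entire row-marginal mass $\Geom(p)(k)$ must sit in the tail $j>k$. Combined with the pointwise bound $M(k,j)\le \Geom(p)(j)$ supplied by the orthogonal (column) marginal, this pins $\Geom(p)(k)$ below the strict tail sum $\sum_{j>k}\Geom(p)(j)$, which is exactly what property~(\ref{prop:second}) forbids in the strict regime $p>\tfrac{1}{2}$.

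For part~(\ref{item:lem:<}) I would assume for contradiction that every element of $L(k)$ is zero and invoke the row marginal to write
\begin{equation*}
\Geom(p)(k) \;=\; \sum_{j\in\nats} M(k,j) \;=\; \sum_{j>k} M(k,j) \;\le\; \sum_{j>k} \Geom(p)(j),
\end{equation*}
where the second equality drops the zero entries along row $k$ up to column $k$, and the inequality uses $M(k,j)\le\Geom(p)(j)$ from the column marginal and non-negativity. For $p>\tfrac{1}{2}$ this directly contradicts property~(\ref{prop:second}).

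For part~(\ref{item:lem:=}) the same chain now only delivers an equality, so I would sharpen the argument by exploiting the interaction between two bad layers. Assume for contradiction that there exist distinct indices $k_1<k_2$ for which all elements of both $L(k_1)$ and $L(k_2)$ are zero. Since $(k_1,k_2)\in L(k_2)$, the single additional entry $M(k_1,k_2)$ is also zero, so it can be removed from the row marginal at $k_1$:
\begin{equation*}
\Geom(p)(k_1) \;=\; \sum_{j>k_1,\,j\neq k_2} M(k_1,j) \;\le\; \sum_{j>k_1}\Geom(p)(j) \;-\; \Geom(p)(k_2) \;=\; \Geom(p)(k_1) - \Geom(p)(k_2),
\end{equation*}
using the equality case of property~(\ref{prop:second}) at $p=\tfrac{1}{2}$. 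Since $\Geom(p)(k_2)>0$, this is absurd.

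The only subtle point is recognising that the L-shape $L(k)$ is precisely the set of entries whose vanishing forces all row- (equivalently column-) mass at level $k$ into the strict tail; and that the additional zero at the off-diagonal corner $(k_1,k_2)$ is exactly the perturbation that converts the borderline equality at $p=\tfrac{1}{2}$ into a genuine contradiction. Everything else is bookkeeping with the marginals.
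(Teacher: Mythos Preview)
Your proof is correct and, for item~(\ref{item:lem:<}), essentially identical to the paper's: both bound the tail entries $M(k,j)$, $j>k$, by $\Geom(p)(j)$ via the orthogonal marginal and then invoke property~(\ref{prop:second}). For item~(\ref{item:lem:=}) you take a slightly more direct route than the paper: whereas the paper argues that a single zero layer $L(k)$ forces the exact values $M(k,j)=\Geom(\tfrac12)(j)$ for every $j>k$ (so that every subsequent $L(m)$ is automatically non-trivial), you instead assume two zero layers and use the extra vanishing entry $M(k_1,k_2)$ to break the borderline equality. Both arguments rest on the same marginal-vs-tail comparison; yours is a clean contradiction, the paper's is more constructive in that it actually determines the matrix along row and column $k$.
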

\begin{proof}
By Property~\ref{prop:second} of \Geom and the fact that, for any matrix $M$ of $\K$ 
satisfying~\eqref{eq:matrix-formula} and~\eqref{eq:matrix-sum}, the elements $M(k,l)$ and $M(l,k)$, for all $l>k$, 
are bounded with $\Geom(p)(l)$, the statement of item~\ref{item:lem:<} is trivial. 

Item~\ref{item:lem:=} follows from the fact $\Geom(\frac{1}{2})(i) = \sum_{j>i} \Geom(\frac{1}{2})(j)$ for all
$i\in \nats$. Since $\sum_{l\in\nats} M(k,l)=\frac{1}{2^{k+1}}$, if $L(k)$ contains only zeros for some $k$, i.e., 
$\sum_{0\leq l\leq k} M(k,l)=0$, then $M(k,k+1)={\frac{1}{2^{k+2}}}$, $M(k,k+2)={\frac{1}{2^{k+3}}}$, etc. Thus, for all 
$m>k$ the set $L(m)$ contains at least two positive elements.
\qed
\end{proof}
The following example confirms that satisfiability of a KB depends on the chosen parameter $p$.
\begin{example}
The \ptdl{} KB $\K=(\TO,\A)$ with $\A=\{\dd_{p}H(a), \dd_{p}T(a)\}$ and $\TO=\{T \equiv \neg H \}$ is satisfiable if 
$p=\frac{1}{2}$. The matrix in this case has the form
\[
M=
\begin{bmatrix}
	\times & \frac{1}{4} & \frac{1}{8} & \frac{1}{16} &\dots \\
    \frac{1}{4} & \times & & & \\
    \frac{1}{8} & & \times & & \\
    \frac{1}{16} & & & \times & \\
    \dots & & & & \dots          
\end{bmatrix},
\]
where the positions $\times$ correspond to unsatisfiable temporal KBs~\eqref{eq:matrix-formula}, and the matrix 
entries are trivially equal to $0$.

However, the same \ptdl{} KB with $p>\frac{1}{2}$ does not have any model, since $M(0,0)$ is unsatisfiable and the 
value of $\Geom(p)(0) > \sum_{j>0} \Geom(p)(j)$ cannot be spread on the rest part of the matrix.
\end{example}
With these basic properties we can develop an (infinite) iterative process of building a matrix for a given \ptdl{} KB 
$\K$. A finite $(\ell+1) \times (\ell+1)$ matrix $M_\ell$ is called \emph{partial}, for $\ell \in \nats$, if 
\begin{itemize}
\item for any $k_1,k_2\in [0,\dots,\ell]$, if $M(k_1,k_2)>0$ then the KB $(\TO,\A_{k_1,k_2})$, where 
$\A_{k_1,k_2}$ is defined by~\eqref{eq:matrix-formula}, is satisfiable; and
\item for any $k\in [0,\dots,\ell]$, $\sum_{0\leq \{k_1,k_2\}\setminus \{k\}\leq \ell}  M(k_1,k_2)=\Geom(p)(k)$.
\end{itemize} 
Clearly, if we can prove there is a partial matrix $M_\ell$ for all $\ell\in \nats$, we can conclude that \ptdl{} $\K$ is satisfiable.
%

\begin{definition}
A pair of matrix entries $M(i,\ell),M(\ell,j)$, for $i,j \leq \ell$,
is called \emph{chained} if  
there is an odd chain of elements, 
\begin{equation}\label{eq:chain}
\{M(i,\ell), M(i,k_1), M(m_1,k_1), M(m_1,k_2),\dots, M(m_{h},j), M(\ell,j)\},
\end{equation} 
where $m_c,k_c < \ell$, for all $1\leq c \leq h\in \nats$, such that, for every element of this chain $M(s,t)$, the temporal KB $(\TO,\A_{s,t})$ is consistent, and every even element is in the chain, e.g., $M(i,k_1), M(m_1,k_2),\dots, M(m_{h},j)$, is chained. 
Trivially, the diagonal element $M(\ell,\ell)$ is chained with itself, if $(\TO,\A_{\ell,\ell})$ is consistent. 
\end{definition}

An important property of chained elements is that increasing $M(i,\ell)$, $M(\ell,j)$, $i,j <\ell$, to some value can be compensated in the sense of preserving exact sums of columns and rows like in~\eqref{eq:matrix-sum} by decreasing the second and the penultimate element in~\eqref{eq:chain}, which are chained and (as we see later) can have non-zero value, then increasing the third and the antepenultimate, etc until decreasing an element in the middle of the chain. 

Now we are ready to prove that a pair of chained elements for every $0\leq k\leq \ell$ ensures the finite matrix $M_\ell$ is partial.
%
\begin{restatable}{lemma}{Lemmaprocess}\label{lem:process}
Any \ptdl{} KB $\K=(\TO,\A)$ with two $\dd_p$-instances of $\Geom(p)$, $\frac{1}{2}\leq p<1$, i.e., 
$\nxt^{n_1}\dd_p\theta_1(\av_1),\nxt^{n_2}\dd_p\theta_2(\av_2) \in \A$ is satisfiable iff there is a matrix $M$ such that either 
\begin{enumerate}
\item\label{lem:process:item1} for any $k\in \nats$, 
there is a chained pair in $L(k)$; or
\item\label{lem:process:item1/2} if $p=\frac{1}{2}$ and there is $k\in \nats$ with no chained elements in $L(k)$, 
then a submatrix $M_{k-1}$ is a partial matrix and $M(i,k)=M(k,i)=\frac{1}{2^{i+1}}$ for all $i>k$.
\end{enumerate} 
\end{restatable}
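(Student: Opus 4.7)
By Theorem~\ref{th:matrix}, the satisfiability of a \ptdl{} KB reduces to the existence of an infinite probability matrix with consistent-KB support and the prescribed marginals, and I would further reduce this to building a partial matrix $M_\ell$ for every $\ell\in\nats$ by a standard compactness argument on the finite-dimensional truncations. The heart of the proof is then to show that the chained-pair condition in $L(\ell)$ is exactly the combinatorial ingredient required to pass from $M_{\ell-1}$ to $M_\ell$, while the structural clause~\ref{lem:process:item1/2} accounts for the degenerate case allowed by the tight identity $\Geom(\tfrac{1}{2})(i)=\sum_{j>i}\Geom(\tfrac{1}{2})(j)$.

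For the $(\Leftarrow)$ direction under condition~\ref{lem:process:item1}, I would build $M_\ell$ by induction on $\ell$. The base case uses the diagonal pair $M(0,0)$ chained with itself, which forces $(\TO,\A_{0,0})$ consistent and lets us set $M(0,0)=\Geom(p)(0)$. For the inductive step I extend $M_{\ell-1}$ to $M_\ell$ using the chained pair $(M(i,\ell),M(\ell,j))$: propagating alternating adjustments $+\varepsilon,-\varepsilon$ along the chain $\{M(i,\ell),M(i,k_1),M(m_1,k_1),\dots,M(m_h,j),M(\ell,j)\}$ preserves every interior row and column sum, because consecutive chain entries share a row or column index $<\ell$, while depositing $\varepsilon$ onto row $\ell$ and column $\ell$. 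The diagonal entry $M(\ell,\ell)$, when self-chained, contributes an equal increment to both. Iterating with small enough $\varepsilon$ respects non-negativity and accumulates the full mass $\Geom(p)(\ell)$. Under condition~\ref{lem:process:item1/2}, I would verify directly that combining the partial $M_{k-1}$ with the explicit values $M(i,k)=M(k,i)=1/2^{i+1}$ for $i>k$ satisfies the sum identities, using the geometric identity above.

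For the $(\Rightarrow)$ direction, I would fix a satisfying matrix $M$ supplied by Theorem~\ref{th:matrix} and, using Lemma~\ref{lem:properties}, locate positive entries in each $L(k)$. The key claim is that a positive entry $M(i,k)\in L(k)$ with $i<k$ must be linked through consistent intermediate positions, all with indices $<k$, to some positive $M(k,j)$ in row $k$: the row-$i$ sum constraint together with the tail bound $\sum_{j'>k}M(i,j')<\Geom(p)(i)$ (from the tail-dominance property of $\Geom(p)$ for $p>\tfrac{1}{2}$) forces further positive mass in row $i$ at some column $\le k$, and iterating this observation builds the chain. The recursive condition that every even element of the chain is itself chained is then discharged by an inner induction on the chain depth, using that the intermediate indices strictly decrease. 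For $p=\tfrac{1}{2}$, the only way some $L(k)$ contains no chained pair is that the row-$k$ and column-$k$ mass is pushed entirely past $k$; the tight equality then pins $M(i,k)=M(k,i)=1/2^{i+1}$ for $i>k$ and forces $M_{k-1}$ to be partial by mass conservation.

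The hardest part is the $(\Rightarrow)$ direction: unpacking the recursive definition of \emph{chained} requires a nested induction tracking both the ambient level $k$ and the chain depth, and it relies on converting positivity in $M$ into the consistency guarantees demanded along the chain. A secondary subtlety in $(\Leftarrow)$ is showing that the $\varepsilon$-adjustment scheme can be iterated so as to accumulate the full target mass without exhausting any single chain entry prematurely; this can be resolved by successive small increments, or more cleanly by framing the construction of $M_\ell$ as the feasibility of a finite linear programme whose constraints encode the chain combinatorics.
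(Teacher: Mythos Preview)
Your $(\Leftarrow)$ direction is essentially the paper's construction: both build $M_\ell$ inductively and use the alternating $\pm$ adjustment along the chain to deposit mass $\Geom(p)(\ell)$ onto row and column $\ell$ while keeping interior sums intact. The paper moves the full value $(1-p)^\ell p$ at once and handles the case where an even chain entry $M(s,t)$ is zero by splitting the mass previously placed in $L(t)$ between two chained pairs; your formulation via small $\varepsilon$ increments or LP feasibility is a reasonable alternative packaging of the same idea.

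The $(\Rightarrow)$ direction, however, diverges from the paper and contains a genuine gap. You start from a positive entry $M(i,k)$ in column $k$ and argue that the tail bound forces further positive mass in row $i$ at some column $\le k$, then ``iterating this observation builds the chain.'' But the chain must terminate at an entry $M(k,j)$ in \emph{row} $k$, and nothing in your iteration forces the walk through positive entries (alternating row moves and column moves inside the $k\times k$ block) ever to hit row $k$. The walk could cycle among indices $<k$ indefinitely, or stay confined to a set of indices that never touches the row-$k$ side. Your inner induction on chain depth addresses the recursive ``even elements are chained'' clause, but not this reachability issue.

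The paper closes this gap by a different, linear-algebraic argument. Suppose no chained pair exists in $L(k)$. Then the consistent positions reachable from the column-$k$ side determine an index set $\{i,i_1,\dots,i_u\}\subseteq\{0,\dots,k-1\}$ that is closed under the row/column adjacency but contains no index $j$ with $(\TO,\A_{k,j})$ consistent. Summing the partial-matrix row constraints over this set and subtracting the column constraints over the same set, the square block on $\{i,\dots,i_u\}\times\{i,\dots,i_u\}$ cancels and the right-hand sides coincide, leaving
\[
M(i,k)+M(i_1,k)+\cdots+M(i_u,k)=0,
\]
so every column-$k$ entry in the component vanishes. Combined with Lemma~\ref{lem:properties}(\ref{item:lem:<}) this yields the contradiction for $p>\tfrac12$, and for $p=\tfrac12$ it isolates the single degenerate level handled by clause~\ref{lem:process:item1/2}. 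To repair your proof you should replace the positivity-walk argument with this double-counting step (or, equivalently, prove a connectivity lemma for the bipartite graph of consistent positions that rules out a column-side component disjoint from the row-side).
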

%
\begin{proof}
We prove Item~\ref{lem:process:item1} both directions by induction, obtaining a partial matrix for every $k \in \nats$. 
%
For the base, $k=0$, by Lemma~\ref{lem:properties}, there are three options:
\begin{enumerate}
\item $p\geq \frac{1}{2}$ and a temporal KB $\K_{0,0}=(\TO,\A_{0,0})$, for $\A_{0,0}$ defined by~\eqref{eq:matrix-formula},
is satisfiable. Thus, we set $M(0,0)=p$, there is a trivial chained pair in $L(0)$, and the finite matrix $M_0$ is partial. 
\item $p > \frac{1}{2}$ and $\K_{0,0}$ is unsatisfiable, then, by Lemma~\ref{lem:properties}(\ref{item:lem:<}) and Theorem~\ref{th:matrix}, the \ptdl{} KB $\K$ is unsatisfiable.
\item $p = \frac{1}{2}$ and $\K_{0,0}$ is unsatisfiable, then we set $M(0,0)=0$ and, by Lemma~\ref{lem:properties}(\ref{item:lem:=}) and Theorem~\ref{th:matrix}, the \ptdl{} KB $\K$ is consistent iff all temporal KBs $(\TO,\A_{0,m})$ and $(\TO,\A_{m,0})$, $m>0$, are satisfiable. One can see that the $(\TO,\A_{0,m})$, for all $m>0$, are satisfiable iff a \ptdl{} KB $(\TO,A_{0,*})$ with an only one $\dd_p$-data instance in the ABox  
$$
A_{0,*}=\nxt^{n_1} \theta_1(\av_1) \cup \nxt^{n_2} \neg\theta_2 (\av_2) \cup \nxt^{n_2+1} \dd_{\frac{1}{2}} \theta_2(\av_2)  \cup \A \setminus \{\bigcup_{i=1}^{2} \nxt^{n_i}\dd_{\frac{1}{2}} \theta_i(\av_i) \}, 
$$  
is satisfiable. Note that with one $\dd_p$-data instance, the type of distribution does not affect the satisfiability as long as it is complete. The satisfiability of $(\TO,\A_{m,0})$, $m>0$, can be checked in the same way. If both these \ptdl{} KBs are consistent, we can assign $M(m,0)=M(0,m)=\frac{1}{2^{m+1}}$ for all $m>0$, and the item~\ref{lem:process:item1/2} of this lemma holds.
\end{enumerate}

Let the statement be correct for some $(k-1)\in \nats$. Consider the $k$-th case: 
\begin{enumerate}
\item if $(\TO,\A_{k,k})$ is satisfiable and, thus, there is a trivial chain in $L(k)$, then we set $M(k,k)=(1-p)^k\cdot p$ and all the rest elements of $L(k)$ as $0$s. By induction hypothesis, $M_{k}$ is a partial matrix.  
\item for $p \geq \frac{1}{2}$, if there is a chained pair $M(i,k)$ and $M(k,j)$, $i,j<k$,  
then we assign $M(i,k)=M(k,j)=(1-p)^{k}\cdot p$ and all the rest elements of $L(k)$ to $0$s. By the property of chained pairs, we can keep the finite matrix $M_k$ partial by alternating summation and subtraction the value of $(1-p)^{k}\cdot p$ from the elements in the chain~\eqref{eq:chain}. If we need to subtract from a chained element $M(s,t)=0$, for, without losing generality, $s\leq t<k$, we split the value $(1-p)^{t}\cdot p$ in $L(t)$ into two parts, $(1-p)^{k}\cdot p$ and $((1-p)^{t}-(1-p)^{k})\cdot p$. Since the geometric distribution is strictly decreasing, these values are positive. We have two chained pairs for $L(t)$, the one we picked before and another with $M(s,t)$. By the procedure we describe in this item, for $L(t)$, $t < k$, we reassign the number $((1-p)^{t}-(1-p)^{k})\cdot p$ to the former pair and $(1-p)^{k}\cdot p$ to the latter. 

Otherwise, if $M(s,t)>0$, by property~\ref{prop:second} of the geometric distribution and our procedure, $M(s,t) > (1-p)^{k+1}\cdot p$.
\item if $p > \frac{1}{2}$ and there is no pair of chained elements $M(i,k)$ and $M(k,j)$, $i,j<k$, we can distinguish several reasons. If all matrix entries $M(g,k)$ or $M(k,g)$, $g<k$ correspond to unsatisfiable temporal KBs, it contradicts Lemma~\ref{lem:properties}(\ref{item:lem:<}), by which we conclude the \ptdl{} KB $\K$ is inconsistent.

The two elements $M(i,k)$ and $M(k,j)$ of satisfiable temporal KBs are unchained because, for each path between $M(i,k)$ and $M(k,j)$, there is a number $l\leq k$ and a satisfiable entry $M(l,m)$, $m<l$, such that there is no chained pair in $L(l)$ for $M(l,m)$ due to each possible path contains a matrix entry of unsatisfiable temporal KB. Indeed, if there are two elements $M(i,k)$ and $M(k,j)$ of satisfiable temporal KBs $\K_{i,k}$ and $\K_{k,j}$ and there is a connecting path of satisfiable elements, but $M(s,t)$ on its even position is not chained, then we can consider each possible path from $M(s,t)$ to an element in $L(\max(s,t))$ and apply the inductive reasoning to this new unchained pair. Since at each step $\max(s,t) < k$, for each possible path between original $M(i,k)$ and $M(k,j)$, we find an element in $L(l)$ with all paths gone through an inconsistent KB.

Also, speaking about column and row sums, an existence of a chain guarantees linear dependence of two chained elements. In effect, for~\eqref{eq:chain}, by variable elimination, we obtain   
\begin{align*}
\begin{cases}
M(i,k_1)=(1-p)^{i}\cdot p -\underline{M(i,\ell)}- \sum_{g\neq \ell} M(i,g), \\
M(m_1,k_1)=(1-p)^{k_1}\cdot p -M(i,k_1)- \sum_{g\neq i} M(g,k_1), \\
\dots \\
\underline{M(\ell,j)}=(1-p)^{j}\cdot p -M(m_{h},j)- \sum_{g\neq {m_h}} M(g,j).
\end{cases} 
\end{align*}
The temporal KBs of the elements on the left side have to be consistent in order to pass this linear dependence. By solving it with the rest equations, we have $\underline{M(\ell,j)}=\underline{M(i,\ell)}$.

As there is no possible chain between $M(i,k)$ and $M(k,j)$, there are two disjoint sets of row/column numbers, elements of which are linearly dependent on $M(i,k)$ and $M(k,j)$. The temporal KBs for elements in the intersection of these sets are inconsistent. When all elements from $L(k)$ are of satisfiable temporal KBs, but unchained, there are at least two disjoint sets, one for vertical elements in $L(k)$, another -- for horizontal.
Consider one of the sets, $\{i,i_1,i_2,\dots,i_u\}$, $u<k-2$, and $j\not\in\{i,i_1,i_2,\dots,i_u\}$. The sum equations with $M(i,k), M(i_1,k),\dots M(i_u,k)$ of satisfiable KBs have the following form:
\begin{align*}
\text{horizontal}
\begin{cases}
M(i,i)+M(i,i_1)+\dots+ M(i,i_u)+M(i,k)=(1-p)^i\cdot p, \\
M(i_1,i)+M(i_1,i_1)+\dots+ M(i_1,i_u)+M(i_1,k)=(1-p)^{i_1}\cdot p,\\
\dots \\
M(i_u,i)+M(i_u,i_1)+\dots+ M(i_u,i_u)+M(i_u,k)=(1-p)^{i_u}\cdot p,
\end{cases} \\
\text{and vertical:}
\begin{cases}
M(i,i)+M(i_1,i)+\dots+ M(i_u,i)=(1-p)^i\cdot p, \\
M(i,i_1)+M(i_1,i_1)+\dots+ M(i_u,i_1)=(1-p)^{i_1}\cdot p,\\
\dots \\
M(i,i_u)+M(i_1,i_u)+\dots+ M(i_u,i_u)=(1-p)^{i_u}\cdot p.
\end{cases}
\end{align*}
By variable elimination, we have $M(i,k)+M(i_1,k)+\dots +M(i_u,k)=0$. As each matrix entry is in $[0,1]$, the only solution is $M(i,k)=M(i_1,k)=\dots =M(i_u,k)=0$.
Therefore, for any elements of the matrix $M_{k-1}$, if there are satisfiable but not chained elements in $L(k)$, they are equal to zero. Hence, by Lemma~\ref{lem:properties}(\ref{item:lem:<}), the \ptdl{} KB $\K$ is inconsistent.

\item if $p = \frac{1}{2}$ and no possible chained pair in $L(k)$, 
then we set $L(k)$ as all $0$s and, by Lemma~\ref{lem:properties}(\ref{item:lem:=}) and Theorem~\ref{th:matrix}, the \ptdl{} KB $\K$ is consistent iff two \ptdl{} KBs $(\TO,A_{k,*})$ and $(\TO,A_{*,k})$ for 
\begin{align*}
\A_{k,*}=\bigcup_{0\leq i < k} \nxt^{n_1+i} \neg\theta_1 (\av_1) \cup \nxt^{n_1+k} \theta_1(\av_1) \cup \bigcup_{0\leq j \leq k} \nxt^{n_2+j} \neg\theta_2 (\av_2) \cup \\
\nxt^{n_2+k+1} \dd_{\frac{1}{2}} \theta_2(\av_2)  \cup \A \setminus \{\bigcup_{i=1}^{2} \nxt^{n_i}\dd_{\frac{1}{2}} \theta_i(\av_i) \}
\end{align*}
and symmetrical formula for $A_{*,k}$, are satisfiable. 
\end{enumerate}
The whole process comes down to the search of a pair of chained elements in $L(k)$. Letting the iteration to infinity, the \ptdl{} KB is satisfiable iff there is no step $k\in \nats$ such that the process stops.
\qed
\end{proof}
The matrix building process is shown in the following example.
\begin{example}
\label{ex:half}
Consider a \ptdl{} ABox $\{\dd_{p}H(a), \nxt\dd_{p}T(a)\}$, where $p=\frac{1}{2}$, with a TBox $\{T\equiv \neg H \}$. 
The matrix building process, according to the proof of Lemma~\ref{lem:process}, runs as follows:
\[M_0=
\begin{bmatrix}
    \frac{1}{2}
\end{bmatrix}, \quad
M_1=
\begin{bmatrix}
	\frac{1}{2} & 0 \\
    \times & \frac{1}{4}          
\end{bmatrix}, \quad
M_2=
\begin{bmatrix}
	\frac{3}{8} & 0 & \frac{1}{8} \\
    \times & \frac{1}{4} & 0 \\
    \frac{1}{8} & \times & \times \\         
\end{bmatrix}, \quad
M_3=
\begin{bmatrix}
	\frac{5}{16} & 0 & \frac{1}{8} & \frac{1}{16}\\
    \times & \frac{1}{4} & 0 & 0 \\
    \frac{1}{8} & \times & \times & \times \\    
    \frac{1}{16} & \times & \times & \times 
\end{bmatrix}, \dots
\]
The sign $\times$ denotes matrix entries of unsatisfiable temporal KBs with the corresponding ABoxes of the 
form~\eqref{eq:matrix-formula}, and these matrix entries are trivially equal to~$0$.
Iterating this process to infinity, the element $M(0,0)$, as the head of chained pairs, stays positive as 
$(\frac{1}{2}-\sum_{n\geq 3} \frac{1}{2^n})=\frac{1}{4}$. Thus, \eqref{eq:matrix-formula} and \eqref{eq:matrix-sum} 
hold and the \ptdl{} KB is consistent. 
\end{example} 
It is worth noting that the condition $\frac{1}{2}\leq p<1$ is crucial for the building process in 
Lemma~\ref{lem:process}.

\begin{example}
Now we let $p$ from Example~\ref{ex:half} be $\frac{1}{5}$. Despite the KBs for matrix entries remaining the same, 
and, for all $L(k)$, $k\geq 2$, we have a pair of chained elements, this \ptdl{} KB is unsatisfiable. 
\end{example} 
In the next subsection we demonstrate how to finitise the process in Lemma~\ref{lem:process}.


\subsection{Periodical Properties}
One can notice that, for a KB $\K$, there are constants $s=s(|\K|)$ and $p=p(|\K|)$ such that, starting from some 
$\ell>s$ time point, the formulas corresponding to the elements from $L(\ell)$ are equisatisfiable with some elements 
from $L(\ell+p)$ and, moreover, we show the following result.

\begin{restatable}{lemma}{Lemmaperiod}\label{lem:period}
For any satisfiable \ptdl{} KB $\K$ over two $\dd_p$-data instances, $\frac{1}{2}\leq p<1$, and any matrix $M$ 
satisfying~\eqref{eq:matrix-formula} and~\eqref{eq:matrix-sum}, 
there exist two integers $s,p < 2^{\TO(|\K|)}$ 
such that,
%
for any $\ell \geq s$ and any pair of chained elements $M(i,\ell)$ and $M(\ell,j)$, $i,j\leq \ell$, their translations $M(i',\ell+p)$ and $M(\ell+p,j')$ are also chained, where 
\begin{align}
i'=\begin{cases}
i, &\text{ if }i < s, \\
i+p, &\text{ otherwise, }
\end{cases} 
\qquad \text{ and } \qquad
j'=\begin{cases}
j, &\text{ if }j < s, \\
j+p, &\text{ otherwise.}
\end{cases} \label{eq:translation}
\end{align}
\end{restatable}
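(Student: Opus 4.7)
The plan is to exploit the ultimately periodic quasimodel property of temporal DLs. Each entry $M(i,j)$ is a satisfiability question for a temporal KB expressible in $T_{\U\Since}\DL_\textit{bool}^{\smash{\mathcal{N}}}$, so by the constructions underlying the complexity results of \cite{DBLP:journals/tocl/ArtaleKRZ14} there is a finite type-transition automaton $\mathcal{B}_\K$ of size $2^{O(|\K|)}$ whose accepting runs encode the temporal type sequences consistent with $\TO$. I would take $s$ to be past the last transient state reachable from the initial state of $\mathcal{B}_\K$, and $p$ to be the least common period of simple loops within the strongly connected components reached after $s$ steps; both quantities are bounded by $2^{O(|\K|)}$, matching the lemma's claim.

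The core step is to establish the entry-wise equisatisfiability $(\TO, \A_{i,j}) \equiv (\TO, \A_{i+p,j})$ whenever $i \geq s$, and symmetrically in $j$. This is a cycle insertion/deletion argument on $\mathcal{B}_\K$: the two ABoxes differ only by a shift of $p$ in the first-occurrence point of $\theta_1(\av_1)$, together with $p$ extra assertions $\nxt^{n_1+i}\neg\theta_1(\av_1),\dots,\nxt^{n_1+i+p-1}\neg\theta_1(\av_1)$. An accepting run for one ABox is converted into one for the other by inserting (or removing) a single loop of length $p$ inside the SCC visited at time $n_1+i$, arranged so that $\theta_1(\av_1)$ stays false throughout the added portion. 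When $i,j \geq s$ the modifications at the two coordinates can be performed independently, since each takes place in the periodic tail of the quasimodel beyond the support of the \dd-free part of $\A$.

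Finally, to lift entry-wise periodicity to chains, take an odd chain $\{M(i,\ell), M(i,k_1), M(m_1,k_1), \dots, M(m_h,j), M(\ell,j)\}$ witnessing that $M(i,\ell)$ and $M(\ell,j)$ are chained, and apply the translation~\eqref{eq:translation} entry by entry. The previous step guarantees that every translated entry still corresponds to a consistent temporal KB, and the alternating chain structure is preserved because the translation acts coordinate-wise and uniformly. Hence $M(i',\ell+p)$ and $M(\ell+p,j')$ are chained, as required.

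The main obstacle is verifying that the cycle insertion respects \emph{all} remaining constraints simultaneously: the $\theta_2(\av_2)$ first-occurrence constraint, the rigid roles (whose interpretation must stay unchanged), and the remaining \dd-free assertions in $\A$ at fixed absolute time points. Because $s$ is exponential in $|\K|$ and therefore dwarfs $\max\{n_i\}$ and $|\A|$, the surgery always takes place where these constraints no longer bind; the delicate point is to pump jointly when $i$ and $j$ lie on the same side of $\ell$, for which I would invoke the product-automaton construction of \cite{DBLP:journals/tocl/ArtaleKRZ14} directly rather than argue at the level of concrete interpretations.
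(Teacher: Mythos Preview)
Your plan is essentially the paper's own argument: reduce each matrix entry to an LTL/B\"uchi-automaton question via the translation of~\cite{DBLP:journals/tocl/ArtaleKRZ14}, pump a loop of length $p$ inside the automaton once past threshold $s$, then push this entry-wise periodicity through the chain structure. The case split you describe (both indices $\geq s$; one $\geq s$ and one $<s$; both $<s$) is exactly what the paper does in its inductive step.

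The one place where your write-up is thinner than the paper is the lift to chains. You say ``the alternating chain structure is preserved because the translation acts coordinate-wise and uniformly,'' but recall that the definition of \emph{chained} is recursive: the even entries $M(i,k_1),M(m_1,k_2),\dots$ must themselves be chained, not merely consistent. Verifying consistency of every translated entry is therefore not sufficient on its own; you also need that the translated even entries remain chained, which is the same statement one level down. The paper handles this by an explicit induction on the length $2k+1$ of the witnessing chain (with the base case $i=j=\ell$), and you should do the same. This is not a different idea---your coordinate-wise translation is precisely what the induction step applies---but without the induction the recursive requirement on even positions is left unjustified.

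A second small point: in the joint-pumping case you defer to a ``product-automaton construction,'' whereas the paper works directly with the single B\"uchi automaton for the flattened formula $\Phi^\downarrow$ and finds a cycle whose states all carry $(\neg\theta_1)^*\wedge(\neg\theta_2)^*$ (resp.\ only $(\neg\theta_1)^*$ in the mixed case). Either phrasing works, but the single-automaton version keeps the size bound $s,p<2^{|\text{sub}(\Phi^\downarrow)|}$ transparent.
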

%
\begin{proof}
We proceed by induction on the length of the chain, $2k+1$. First, if $k=0$ and $i=j=\ell \geq s$, we can use the 
same reasoning as for the periodic property of an LTL formula. 

Given \ptdl{} KB $\K$, let $\A^{\downarrow}$ the ABox obtained by substituting both occurrences of 
$\nxt^{n_i}\dd_p \theta_i(\av_i)$, $i=1,2$, in $\A$ with the tautology $\nxt^{n_i} T_i(\av_i)$ by a fresh concept or role 
name $T_i$, for which we add an axiom to the ontology, 
$\TO^\downarrow=\TO\cup \{T_i\equiv (\theta_i\vee\neg \theta_i)\}$. Clearly, among models of 
$\K^\downarrow=(\TO^\downarrow,\A^\downarrow)$ there are all temporal models $\Imc$ of any probabilistic model 
$\Pmc=(\Imc,\mu)$ of $\K$. Then, since $\K^\downarrow$ and $\K_{\ell,\ell}=(\TO,\A_{\ell,\ell})$ are temporal KBs 
and belong to a proper sub-language of 
$T_{\U\Since}\DL_\textit{bool}^{\smash{\mathcal{N}}}$~\cite{DBLP:journals/tocl/ArtaleKRZ14}, we apply the
polynomially big (in the size of $\K$) translation 
from~\cite{DBLP:journals/tocl/ArtaleKRZ14,DBLP:conf/frocos/ArtaleKRZ09} to $\K^\downarrow$ and $\K_{\ell,\ell}$ 
which results in equisatisfiable LTL formulas $\Phi^\downarrow$, $\Phi_{\ell,\ell}$. The detailed reduction is 
described in Appendix~\ref{app:reduction}. 
Note that ABox instances are added to translations as conjuncts. Thus, if $\Phi_{\ell,\ell}$ and $\Phi_{\ell+p,\ell+p}$ 
are consistent, its models are also models of $\Phi^\downarrow$.

Let $|\text{sub}(\Phi^\downarrow)|$ be the number of subformulas of $\Phi^\downarrow$. It is known, for any propositional formula, that, the number of subformulas $|\text{sub}(\Phi^\downarrow)|$ is polynomially bounded on the length of $\Phi^\downarrow$. Therefore, it is finite. 

Consider a B\"uchi automaton that recognises the LTL formula $\Phi^\downarrow$. If $\Phi_{\ell,\ell}$ is consistent, then the automata accepts also all its models. Since $\ell\geq s=2^{|\text{sub}(\Phi^\downarrow)|}+1$ and the size of the automata is bound with $2^{|\text{sub}(\Phi^\downarrow)|}$, there is a cycle of states with propositional translations of $(\neg\theta_1)^*,(\neg \theta_2)^*$ of the size $p < 2^{|\text{sub}(\Phi^\downarrow)|}$. Thus, any model of $\Phi_{\ell,\ell}$ can be turned into a model of the propositional formula $\Phi_{\ell+p,\ell+p}$ by repeating the cycle one more time. 

%
%
%
We assume the statement holds for all pairs belonged to chains of lengths $\leq 2k+1$. Consider $M(i,\ell),M(\ell,j)$ with the chain 
$$
\{M(i,\ell), M(i,k_1), M(m_1,k_1), M(m_1,k_2),\dots, M(m_{k},j), M(\ell,j)\},
$$
of length $2k+3$. Without losing generality, we also assume that the chained elements on even places, $M(i,k_1), M(m_1,k_2),\dots M(m_{k},j)$, are belonged to their own chains of the lengths $\leq 2k+1$. If it is not the case, we can consider a longer chain of an element from $L(h)$, $h<\ell$. Since chains lay inside the $h\times h$ square, we can find a chain with its elements belonged to at most $(2k+1)$-length chains.

Since our translation~\eqref{eq:translation} depends on the position, the new chain 
$$\{M(i',\ell'), M(i',k_1'), M(m_1',k_1'), \dots, M(\ell',j')\}$$
of length $2k+3$ exists if the temporal KB $\K_{f',g'}$ satisfiable for every element $M(f',g')$ in the chain. For 
$M(f',g')$, we have the following cases:
\begin{itemize}
\item $f,g < s$ is trivial: $\K_{f,g}$ is satisfiable as $M(f,g)$ is in the original chain;
\item $f,g \geq s$, then we can apply a similar reasoning as in the basis case to prove the temporal KBs $\K_{f+p,g+p}=(\TO,\A_{f+p,g+p})$ is satisfiable;
\item $f \geq s > g$, again, we use an LTL reduction, $\Phi_{f,g}$. In the B\"uchi automata for $\Phi^\downarrow$, there is a cycle of the size $p < 2^{|\text{sub}(\Phi^\downarrow)|}$ where all states contain the propositional translation of $(\neg\theta_1)^*$. Also, as $g < s$, a state of this cycle can contain the positive propositional $\theta_2^*$. Thus, by repeating the cycle, a model for $\Phi_{f+p,g}$ is obtained, since the conjunct $\nxt^{n_2}(\bigwedge_{0\leq i<g} \nxt^i(\neg\theta_2)^*\wedge \nxt^g \theta_2^*)$ in $\Phi_{f+p,g}$ is satisfied at the first round of the cycle;
%
\item $g \geq s > f$, this case can be shown by a similar reasoning as in the previous item.
\end{itemize}
By induction hypothesis, translations of chained elements $M(f,g)$, for $f,g<\ell$, remain chained. Thus, $M(i',\ell+p)$ and $M(\ell+p,j')$ are indeed chained. 
\qed
\end{proof}
\begin{theorem}
Satisfiability of \ptdl{} KBs with $\Geom(p)$, $\frac{1}{2}\leq p<1$ can be decided in~\textsc{ExpSpace}.
\end{theorem}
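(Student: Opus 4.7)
The plan is to turn Lemma~\ref{lem:process} into a terminating decision procedure by truncating the level-by-level iteration at a bound provided by Lemma~\ref{lem:period}, and then to carefully analyse the space used at each level. Concretely, $\K$ is satisfiable iff the matrix-building process of Lemma~\ref{lem:process} never halts, and Lemma~\ref{lem:period} says that once the level index reaches $s\leq 2^{O(|\K|)}$, chained pairs repeat with period $p\leq 2^{O(|\K|)}$ by a shift-and-translate argument. Hence the process never halts iff it has not halted by level $s+p$, so it suffices to check the condition of Lemma~\ref{lem:process} for $k=0,1,\dots,s+p$; the level counter fits in $O(|\K|)$ bits.

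First I compute $s$ and $p$ from the polynomial translation of $\K^{\downarrow}$ into LTL and the standard $2^{O(|\K|)}$ bound on its B\"uchi automaton. Then I iterate $k=0,1,\dots,s+p$ and, at each level, verify the criterion of Lemma~\ref{lem:process}: either there is a chained pair $M(i,k),M(k,j)$ with $i,j\le k$, or (when $p=\tfrac{1}{2}$) the two auxiliary single-\dd\ KBs $(\TO,\A_{k,*})$ and $(\TO,\A_{*,k})$ of case~\ref{lem:process:item1/2} are satisfiable. Both checks reduce to (a) satisfiability of temporal KBs of the form $(\TO,\A_{k_1,k_2})$ from~\eqref{eq:matrix-formula}, of size $O(|\K|+k_1+k_2)=2^{O(|\K|)}$, and (b) reachability in a graph whose vertices are those matrix entries whose temporal KB is satisfiable, of size at most $(k{+}1)^2=2^{O(|\K|)}$. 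For (a), each temporal KB lies in $T_{\U\Since}\DL_{\textit{bool}}^{\mathcal{N}}$ and admits a polynomial translation to LTL (the same translation invoked in Lemma~\ref{lem:period}); LTL satisfiability is in \textsc{PSpace} in the formula size, hence in \textsc{ExpSpace} in $|\K|$. For (b), directed reachability is in \textsc{NLogSpace} in the graph size, i.e.\ in polynomial space in $|\K|$ for the working pointers.

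If the process succeeds at every level $k\le s+p$, Lemma~\ref{lem:period} propagates a chained pair at each $\ell\in[s,s+p-1]$ to chained pairs at $\ell+p,\ell+2p,\dots$ and hence to every level $>s+p$, so $\K$ is satisfiable; if the process fails at some level, Lemma~\ref{lem:process} and Theorem~\ref{th:matrix} yield unsatisfiability. The main obstacle is avoiding a doubly exponential blow-up: the exponential-sized submatrix of satisfiability bits that feeds the reachability check at level $k$ cannot be stored explicitly. The resolution is to never materialise the matrix and to recompute each bit on demand from its temporal KB $(\TO,\A_{k_1,k_2})$ whenever a reachability edge needs to be examined; the outer reachability algorithm uses only polynomial pointer space, and each bit query runs in \textsc{ExpSpace}. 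Composing the outer polynomial counter, the polynomial-space reachability, and the \textsc{ExpSpace} inner satisfiability tests gives an overall \textsc{ExpSpace} decision procedure.
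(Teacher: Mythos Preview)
Your high-level strategy coincides with the paper's: use Lemma~\ref{lem:process} together with the periodicity of Lemma~\ref{lem:period} to bound the relevant levels by $s+p\le 2^{O(|\K|)}$, and then check the chained-pair criterion level by level. The paper's implementation is slightly different---it nondeterministically guesses the satisfiability pattern of the whole $(s{+}p)\times(s{+}p)$ block and the chained pairs, obtains \textsc{NExpSpace}, and applies Savitch---whereas you aim for a direct deterministic procedure with on-demand recomputation. Both routes are fine in principle, but your write-up has two genuine gaps.

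First, the notion ``chained'' from the Definition preceding Lemma~\ref{lem:process} is \emph{recursive}: a pair $M(i,\ell),M(\ell,j)$ is chained only if every even-position element $M(s,t)$ of the witnessing chain (with $s,t<\ell$) is itself chained at level $\max(s,t)$. Hence the graph you describe---vertices being exactly the entries with satisfiable temporal KB---is too coarse; plain reachability among satisfiable entries can produce a path whose even elements are \emph{not} chained at their own levels, which does not certify the existence of a chained pair in $L(k)$. To repair this within \textsc{ExpSpace} you must either (i) maintain, as you climb the levels, the set of entries that are already known to be chained at their level and restrict the intermediate vertices of your reachability search to those, or (ii) recompute ``chained at level $t$'' on demand by a recursion of depth at most $s+p$, each frame carrying only the polynomial-size reachability pointers. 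Either fix stays in \textsc{ExpSpace}, but the argument as written does not.

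Second, in the $p=\tfrac{1}{2}$ branch you reduce the auxiliary KBs $(\TO,\A_{k,*})$ and $(\TO,\A_{*,k})$ to your item~(a), i.e.\ to satisfiability of \dd-free temporal KBs. These KBs are \emph{not} \dd-free: each still contains one $\dd_{1/2}$-assertion, so the LTL translation you invoke does not apply directly. The paper handles this case by appealing to the \textsc{coNExpTime} decision procedure for the single-\dd\ fragment from~\cite{ourKR}; you would need either to cite that, or to unfold the single-\dd\ instance into the one-dimensional analogue of Theorem~\ref{th:matrix} and argue (via the same periodicity) that only exponentially many plain temporal KBs $(\TO,\A_{k,m})$ and $(\TO,\A_{m,k})$ need to be tested. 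Without one of these, the $p=\tfrac12$ case is not covered.
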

\begin{proof}
To check if the conditions of Lemma~\ref{lem:process}(\ref{lem:process:item1}) hold, we ``guess'' the positions of 
unsatisfiable temporal KBs in the finite matrix $M_{s+p}$, where $s+p < 2^{f(\K)+1}$ and $f(\K)$ is a polynomial 
function. Then, for every $k<s+p$, among $L(k)$ we choose a pair of chained entries. If it is possible, 
Lemma~\ref{lem:period} ensures that the partial matrix $M_{s+p}$ can be extended to infinity.

If $p=\frac{1}{2}$ and there is a number $k\in \nats$ with no chained elements in $L(k)$ as in 
Lemma~\ref{lem:process}(\ref{lem:process:item1/2}), then $k<s+p$. Otherwise, if $k \geq s+p$, it contradicts 
Lemma~\ref{lem:period} which guarantees existence chained elements in $L(k)$, if they exist for all $L(i)$, $i<s+p$. 
Thus, we need to find a partial matrix $M_{k-1}$, which is in \textsc{NExpSpace} as in the case $\frac{1}{2}< p<1$, 
and then check the satisfiability of two exponentially big LTL formulas with an only one $\dd_p$-instance.
A \textsc{coNExpTime} oracle for verifying this has been presented in~\cite{ourKR}. 

Finally, in both cases, by Savitch's theorem, the satisfiability problem belongs to~\textsc{ExpSpace}. 
\qed
\end{proof}

The restriction to geometric distribution is not essential for most results. The main theorems hold for any (even not 
complete) distribution $\delta$ with the property $\delta(i)> \sum_{j>i}\delta(j)$, for all $i,j\in \dom(\delta)$. However, 
the case $\delta(i) = \sum_{j>i}\delta(j)$ relies on the procedure from~\cite{ourKR} which exploits complete 
distribution properties.

\section{Conclusions}
We have proposed a probabilistic temporal DL that is derived from a temporal \DL{} by specifying an exact 
distribution for a concept or a role in the data to be observed. We have also provided a first but substantial analysis 
of the complexity of reasoning in this logic, considering the standard reasoning task of KB consistency.
The importance of our formalism arises from the fact that temporal observations of events can usually be 
predicted with a probabilistic distribution over time; e.g., through an analysis of historical data.

This work is a first step towards a full formalism of uncertain temporal evolution of events, based on DLs.
Our work extends previous results~\cite{ourKR} developed for LTL, which can be seen as a special case of 
\ptdl{} where only one individual name exists. This paper provides new results, where more than one occurrence
of the distribution eventuality $\dd_\delta$ may be observed in a model.

Following the footsteps of~\cite{DBLP:conf/dlog/JungL13}, an interesting direction for future research is to consider 
query answering under temporal ontologies in data-centric applications with uncertain temporal data.
Along with a possible extension of temporal ontologies with interval-valued probabilistic constraints, for future work we also want to obtain effective methods for computing probabilities of different events and answer 
different types of probabilistic queries. Another possible line for research is the computation of the \emph{expected} 
(essentially, the average) time required until a desired property is observed, as it was previously done 
in~\cite{ourKR}.


\bibliographystyle{splncs03}
\bibliography{local}

\newpage
\appendix
%

\section{Reduction to First-Order Temporal Logic}\label{app:reduction}
In this section we apply the reduction provided in~\cite{DBLP:journals/tocl/ArtaleKRZ14,DBLP:conf/frocos/ArtaleKRZ09} from a temporal KB $\K$ with $\dd_\delta$-free ABox to $\mathcal{QTL}^1$, the one-variable fragment of first-order temporal logic over $(\nats,<)$, without existential quantifiers.

With every individual name $a\in \ind(\K)$ we associate the individual constant $a$ of $\mathcal{QTL}^1$, concept names $A$ to unary predicates $A(x)$, and existential concepts $\exists R$ to unary predicates $ER(x)$. Let
$\mathit{role}(\K)$ be the set of rigid and flexible role names, together with their inverses, occurring in $\K$.

By induction on the construction of a concept $C$, we define the $\mathcal{QTL}^1$-formula $C^*(x)$
\begin{align*}
A^*&=A(x),\ &&\bot^*=\bot,\ &(\exists R)^*=ER(x) \\
(C_1\U C_2)^*&=C_1^*\U C_2^*,\ &&(C_1\sqcap C_2)^*=C_1^*\wedge C_2^*,\ &(\neg C)^*=\neg C^*,
\end{align*}
and a similar translation for $\Diamond C$, $\nxt C$, $\Box C$. For the TBox $\T$,
$$
\T^\dagger= \bigwedge_{C_1 \sqsubseteq C_2 \in \T} \Box \forall x (C_1^*(x) \rightarrow C_2^*(x))
$$ 
and, for the RBox $\R$,
$$
\R^\dagger= \bigwedge_{\scriptsize\shortstack{$R_1 \sqsubseteq R_2 \in \R$ or \\ $R_1^- \sqsubseteq R_2^- \in \R$}} \Box \forall x (ER_1(x) \rightarrow ER_2(x))
$$ 

The following two properties for roles, if $R$ is a rigid role, then an $R$-successor at some moment implies an $R$-successor at all moments of time; if the domain of a role is not empty, then its range is not empty either, are encoded as 
\begin{align}
\bigwedge_{R \text{ is rigid}} \Box\forall x (\Diamond(\exists R)^*(x) \rightarrow \Box (\exists R)^*(x)), \label{eq:translation-role1} \\
\bigwedge_{R \in \mathit{role}(\K)} \Box \forall x((\exists R)^*(x) \rightarrow \exists x (\exists R^-)^*(x))\label{eq:translation-role2}
\end{align}
Formula~\eqref{eq:translation-role2} can be substituted with $\exists$-free expression, 
\begin{equation}\label{eq:translation-role2!}
\bigwedge_{R \in \mathit{role}(\K)} \Box \forall x(\Diamond(\exists R)^*(x) \rightarrow \Box p_R)\wedge(p_{R^-} \rightarrow (\exists R^-)^*(d_R)),
\end{equation} 
for a fresh constant $d_R$ and a fresh propositional variable $p_R$. The variables $p_R$ and $p_{R^-}$ indicate that $R$ and $R^-$ are non-empty whereas $d_R$ and $d_{R^-}$ witness that at 0.

Denote by $\sqsubseteq_\R^*$ the reflexive and transitive closure of the relation 
$$((R,R'),(R^-,(R')^-)\;|\; R \sqsubseteq R' \in \R),$$ where $R^-=S^-$, if $R=S$, and $R^-=S$, otherwise.

We assume that $A$ contains $\nxt^n R^-(b,a)$ whenever it contains $\nxt^n R(a,b)$. For each explicitly mentioned in the ABox $n\in \nats$ and each role $R$, we define the temporal slice $\A^R_n$ and $\A^R_\Box$ of $\A$ by taking
\begin{equation*}
\A^R_\Box=\begin{cases}
\{R(a,b) \;|\; \nxt^m R'(a,b)\in \A, m\in \nats, \text{ and }R'\sqsubseteq_\R^* R\}, &R \text{ is a rigid role} \\
\{R(a,b) \;|\; R'(a,b)\in \A^{R'}_\Box, \text{ for } R'\sqsubseteq_\R^* R\}, &R \text{ is a flexible role}.
\end{cases}
\end{equation*}
and 
\begin{equation*}
\A^R_n=\begin{cases}
\A^R_\Box, &R \text{ is a rigid role} \\
\{R(a,b) \;|\; \nxt^n R'(a,b)\in \A, \text{ for } R'\sqsubseteq_\R^* R \}  \\
\qquad\cup \{R(a,b) \;|\; R'(a,b)\in \A^{R'}_\Box, \text{ for } R'\sqsubseteq_\R^* R \} , &R \text{ is a flexible role}.
\end{cases}
\end{equation*}
We also set $\A_\Box=\bigcup_{R\in \mathit{role}(\K)}\A^R_\Box$ and $\A_n=\bigcup_{R \in \mathit{role}(\K)}\A^R_n$, for all $n\in \nats$ in the ABox.

The $\mathcal{QTL}^1$ translation of the ABox is defined as follows:
\begin{align*}
\A^\dagger=\bigwedge_{\nxt^n A(a)\in \A} \nxt^n A(a) \wedge \bigwedge_{\nxt^n\neg A(a)\in \A} \nxt^n \neg A(a)\wedge \bigwedge_{R(a,b)\in \A_n} \nxt^n (\exists R)^* (a)\wedge \\
\bigwedge_{R(a,b)\in \A_\Box} \Box (\exists R)^* (a)\wedge
\bigwedge_{\scriptsize\shortstack{$\nxt^n \neg R(a,b)\in \A$ \\ $R(a,b)\in \A_n$}} \bot.
\end{align*}
The $\mathcal{QTL}^1$ translation $\K^\dagger$ of $\K$ is the conjunction of $\T^\dagger$, $\R^\dagger$, $\A^\dagger$ and formulas~(\ref{eq:translation-role1},\ref{eq:translation-role2!}). The size of $\K^\dagger$ is polynomial in the size of $\K$.

\begin{lemma}[\cite{DBLP:journals/tocl/ArtaleKRZ14}]
A temporal KB $\K$ is satisfiable iff the $\mathcal{QTL}^1$ sentence $\K^\dagger$ is satisfiable.
\end{lemma}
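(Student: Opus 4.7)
The plan is to establish the equivalence by explicit model construction in both directions. The key observation is that the translation flattens roles: in $\mathcal{QTL}^1$ they appear only through the unary predicates $(\exists R)^*(x)$ together with the propositional variables $p_R$ and witness constants $d_R$. Consequently the easy direction is a direct lifting of a temporal interpretation to a one-variable interpretation, whereas the harder direction requires reconstructing binary role extensions from the flattened unary information via a chase-style saturation.

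For the direction ($\Rightarrow$), take a temporal interpretation $\I$ of $\K$ and build a $\mathcal{QTL}^1$ interpretation $\mathfrak{M}$ over the same flow of time with domain $\Delta^\I$ and constants $a^{\mathfrak{M}} := a^{\I}$. At each $n\in\nats$ put $A^{\mathfrak{M}(n)} := A^{\I(n)}$ and $(\exists R)^{\mathfrak{M}(n)} := (\exists R)^{\I(n)}$; interpret $p_R$ as true at every time iff $R^{\I(m)}$ is non-empty for some $m$, and choose $d_R^{\mathfrak{M}}$ to be any element of $\Delta^\I$ that witnesses $\exists R^-$ (rigidity of $R$ then propagates this witness globally). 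A routine induction on the construction of $C$ shows that $a^{\I} \in C^{\I(n)}$ iff $\mathfrak{M}, n \models C^*(a)$, from which satisfaction of every conjunct of $\T^\dagger$, $\R^\dagger$, $\A^\dagger$ and of the role axioms \eqref{eq:translation-role1} and \eqref{eq:translation-role2!} follows directly from the definition of $\I \models \K$.

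For the direction ($\Leftarrow$), given $\mathfrak{M} \models \K^\dagger$ I would build $\I$ by a chase. Start with $\Delta^\I$ equal to the domain of $\mathfrak{M}$ extended by fresh anonymous witnesses: for every element $x$, every role name $R$, and every time $n$ with $\mathfrak{M}, n \models (\exists R)^*(x)$, add a witness $w_{x,R,n}$ (a single rigid witness $w_{x,R}$ suffices when $R$ is rigid, by \eqref{eq:translation-role1}), declare $(x, w_{x,R,n}) \in R^{\I(n)}$, and close the role extensions under the role inclusions of $\R$ via $\sqsubseteq_\R^*$. Let $a^{\I} := a^{\mathfrak{M}}$, interpret $A^{\I(n)}$ on the $\mathfrak{M}$-domain exactly as $A^{\mathfrak{M}(n)}$, and on each witness $w_{x,R,n}$ take $A^{\I(n)}$ to be the minimal type propagated by $\T$ along $R^-$ from the parent at time $n$. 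For ABox role atoms $\nxt^n R(a,b) \in \A$, the pair $(a^{\I}, b^{\I})$ is placed in $R^{\I(n)}$ directly rather than via a fresh witness; the conjunct $\nxt^n (\exists R)^*(a)$ in $\A^\dagger$ makes this consistent with $\mathfrak{M}$.

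The main obstacle is verifying soundness of this construction on three fronts: (i) that every CI and RI holds at every time point, which requires an induction lifting the equivalence $a^{\I} \in C^{\I(n)} \iff \mathfrak{M}, n \models C^*(a)$ from original ABox individuals to the chase witnesses and exploits $\T^\dagger$, $\R^\dagger$ for the inductive step on $\exists R$; (ii) that no negative ABox role assertion $\nxt^n \neg R(a,b)$ is violated, which is blocked by the $\bot$ conjuncts in $\A^\dagger$ for pairs with both an asserted $R$-edge and a negated one, and by freshness of the chase witnesses otherwise; and (iii) that the interplay among rigidity, the propositional variables $p_R$ and the witness constants $d_R$ respects \eqref{eq:translation-role1}--\eqref{eq:translation-role2!}. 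The subtle point is that the flattening of $\exists R$ to unary predicates must not lose information on which role inclusions of $\R$ depend: this is exactly what the closure under $\sqsubseteq_\R^*$ in the definitions of $\A^R_n$ and $\A^R_\Box$ guarantees, so that the single-variable translation $\R^\dagger$ together with the witness mechanism is enough to reconstruct a faithful binary role extension in $\I$.\qed
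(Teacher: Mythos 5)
First, a point of comparison: the paper does not prove this lemma at all --- it is imported verbatim from \cite{DBLP:journals/tocl/ArtaleKRZ14}, and the appendix only reproduces the translation. So your attempt cannot be matched against an in-paper argument; it has to stand on its own as a reproof of the cited result, and judged that way it has a real gap.

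Your ($\Rightarrow$) direction is essentially fine, apart from one slip: for a flexible role $R$, rigidity does not ``propagate the witness globally'', and \eqref{eq:translation-role2!} only asks $d_R$ to satisfy $(\exists R^-)^*$ at time $0$; the fact that a single time-$0$ witness suffices is a saturation property of this fragment that itself needs proof, not something inherited from the $\I$ you started with. The genuine gap is in ($\Leftarrow$). You assign to each fresh witness $w_{x,R,n}$ ``the minimal type propagated by $\T$ along $R^-$''. But the concept language here is fully Boolean ($\neg$, $\sqcap$, $\U$, $\Diamond$, $\Box$, $\nxt$ on both sides of CIs), so minimal types need not exist; moreover a witness needs a coherent \emph{temporal} type over all of $\nats$ (an axiom such as $\exists R^-\sqsubseteq\Diamond\exists S$ forces further successors at later instants), so your chase must be iterated to $\omega$ and its coherence argued, which you do not do. The standard way around this --- and the reason the translation carries the constants $d_R$ and propositions $p_R$ in the first place --- is not to invent types for fresh elements but to reuse elements of $\mathfrak{M}$ itself: take countably many copies of the first-order temporal model and wire an element requiring an $R$-successor in copy $k$ to the designated $(\exists R^-)$-witness in copy $k+1$; every element of $\I$ then inherits a complete temporal type from $\mathfrak{M}$, and all CIs hold automatically by $\T^\dagger$. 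As written, the chase step with ``minimal types'' is exactly where your proof would fail.
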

Since $\K^\dagger$ contains no existential quantifiers, it can be regarded as a propositional temporal
formula because all the universally quantified variables can be instantiated by all the constants in
the formula with a polynomial blow-up.

\end{document}